%
%
%
%

\documentclass[runningheads,a4paper]{llncs}

\usepackage{amssymb,amsmath,tikz}
\setcounter{tocdepth}{3}
\usepackage{graphicx}

\usepackage{url}
\usepackage{enumerate}
\newcommand{\keywords}[1]{\par\addvspace\baselineskip
\noindent\keywordname\enspace\ignorespaces#1}

\makeatletter
\newcommand{\text@hyphens}{\mathcode`\-=`\-\relax}
\newcommand{\id}[1]{\ensuremath{\mathit{\text@hyphens#1}}}
\makeatother

\begin{document}

\mainmatter  

\title{Individual and Group Stability in Neutral Restrictions of Hedonic Games}

\titlerunning{Individual and Group Stability in Neutral Restrictions of Hedonic Games}

%
%
\author{
Warut Suksompong%
\thanks{Supported by a Stanford Graduate Fellowship.}%
}
\authorrunning{W. Suksompong}

\institute{Department of Computer Science, Stanford University\\
353 Serra Mall, Stanford, CA 94305, USA\\
\email{warut@cs.stanford.edu}\\
}

%
%

\maketitle

\begin{abstract}
We consider a class of coalition formation games called \textit{hedonic games}, i.e., games in which the utility of a player is completely determined by the coalition that the player belongs to. We first define the class of \textit{subset-additive hedonic games} and show that they have the same representation power as the class of hedonic games. We then define a restriction of subset-additive hedonic games that we call \textit{subset-neutral hedonic games} and generalize a result by Bogomolnaia and Jackson (2002) by showing the existence of a Nash stable partition and an individually stable partition in such games. We also consider \textit{neutrally anonymous hedonic games} and show that they form a subclass of the subset-additive hedonic games. Finally, we show the existence of a core stable partition that is also individually stable in neutrally anonymous hedonic games by exhibiting an algorithm to compute such a partition. 
\keywords{hedonic game, coalition formation, stability, cooperative game }
\end{abstract}

\section{Introduction}

Coalition formation plays a major role in a broad range of settings. Whether politicians forming political parties to participate in an election, countries joining alliances to increase their negotiation power, or students getting together in groups for a classroom project, one can observe coalition formation at work. It is therefore important to understand the process of how coalitions form: when can we expect coalitions to form, and when are the coalition members ``happy'' with their coalitions? Within game theory, the importance of coalitions became clear since von Neumann and Morgenstern's seminal work \textit{Theory of Games and Economic Behavior} was published in 1944.

The hedonic viewpoint of coalition formation was introduced by Dr\`{e}ze and Greenberg (1980), who first called attention to the ``hedonic aspect'' of the game, i.e., the determination of a player's utility by the coalition that the player belongs to. Banerjee et al. (2001) and Bogomolnaia and Jackson (2002) introduced and analyzed stability concepts in hedonic coalition formation games, including \textit{Nash stability}, \textit{individual stability}, and \textit{core stability}. Among other things, Bogomolnaia and Jackson defined restricted classes of hedonic games called \textit{additively separable hedonic games} and \textit{symmetric additively separable hedonic games}, and showed that a Nash stable partition always exists in the latter class of games but not necessarily in the former. Since then, several restrictions on hedonic games have been proposed, including hedonic games based on best players and hedonic games based on worst players by Cechl\'{a}rov\'{a} and Romero-Medina (2001) and fractional hedonic games by Aziz et al. (2014). Other authors who have proposed and analyzed restrictions on hedonic games include Alcalde and Revilla (2004), Alcalde and Romero-Medina (2006), Burani and Zwicker (2003), and Dimitrov et al. (2006). For an excellent survey on hedonic games, we refer to Hajdukov\'{a} (2006).

In this paper, we define the class of \textit{subset-additive hedonic games}, which generalizes the class of additively separable hedonic games. We show that our class does not provide any restriction on the game -- any hedonic game is also a subset-additive hedonic game. We then define the class of \textit{subset-neutral hedonic games}, which is a restriction of subset-additive hedonic games and a generalization of symmetric additively separable hedonic games. Even though subset-neutral hedonic games enjoy significantly more representation power than symmetric additively separable hedonic games, they also provide a guarantee of the existence of a Nash stable partition and an individually stable partition. We also consider \textit{neutrally anonymous hedonic games}, which is a somewhat restricted class of games but can still model many interesting situations. We show that they form a subclass of the subset-neutral hedonic games, hence inheriting the guarantee of the existence of a Nash stable partition and an individually stable partition. Finally, we show that a core stable partition that is also individually stable is guaranteed to exist in neutrally anonymous hedonic games by exhibiting an algorithm that computes such a partition.

\section{Definitions and notation}

In this section, we introduce the setting and give definitions and notation that we will use throughout this paper.

Let $N=\{1,2,\ldots,n\}$ be a finite set of $n$ players in the game. Denote by $N_i$ the set of all subsets of $N$ that include $i$. A \textit{coalition} is a nonempty subset of $N$. A \textit{coalition partition} $\pi$ is a partition of the set $N$ into disjoint coalitions. Denote by $C_\pi(i)$ the coalition in $\pi$ that $i$ belongs to. For any set $S$, denote by $2^S$ the set of all subsets of $S$ and $|S|$ the size of $S$.

We assume throughout the paper that the players' preferences are \textit{hedonic}, i.e., they are completely determined by the coalition that the player belongs to. Each player $i$ is endowed with a preference relation $\succeq_i$, a reflexive, complete, and transitive ordering over $N_i$. Let $\succ_i$ denote the strict part and $\sim_i$ the indifference part of the relation $\succeq_i$. A \textit{hedonic coalition formation game} is represented by a pair $(N, \{\succeq_i\}_{i=1}^n)$.

We now define stability notions that we will consider in the paper. The following three stability notions, which consider deviations by a single player, were introduced by Bogomolnaia and Jackson (2002). 

\begin{definition}
A coalition partition $\pi$ is \textit{Nash stable} if $C_\pi(i)\succeq_i C\cup\{i\}$ for all $i\in N$ and $C\in\pi\cup\{\emptyset\}$.
\end{definition}

\begin{definition}
A coalition partition $\pi$ is \textit{individually stable} if for any $i\in N$ and $C\in\pi\cup\{\emptyset\}$ such that $C\cup\{i\}\succ_i C_\pi(i)$, there exists a player $j\in C$ such that $C\succ_j C_k\cup\{i\}$.
\end{definition}

\begin{definition}
A coalition partition $\pi$ is \textit{contractually individually stable} if for any $i\in N$ and $C\in\pi\cup\{\emptyset\}$ such that $C\cup\{i\}\succ_i C_\pi(i)$, there exists either a player $j\in C$ such that $C\succ_j C_k\cup\{i\}$, or a player $k\in C_\pi(i)$ such that $k\neq i$ and $C_\pi(i)\succ_{k} C_\pi(i)\backslash\{i\}$.
\end{definition}

Any Nash stable partition is also individually stable and any individually stable partition is also contractually individually stable. Ballester (2004) showed that every hedonic game contains at least one contractually individually stable partition.

Next, we define two stability notions that consider deviations by a coalition. 

\begin{definition}
A coalition $C$ \textit{blocks} a coalition partition $\pi$ if $C\succ_i C_\pi(i)$ for all $i\in C$. A coalition partition $\pi$ is \textit{core stable} if any coalition $C\subseteq N$ does not block $\pi$.
\end{definition}

\begin{definition}
A coalition $C$ \textit{weakly blocks} a coalition partition $\pi$ if $C\succeq_i C_\pi(i)$ for all $i\in C$ and $C\succ_j C_\pi(j)$ for some $j\in C$. A coalition partition $\pi$ is \textit{strong core stable} if any coalition $C\subseteq N$ does not weakly block $\pi$.
\end{definition}

Any strong core stable partition is also core stable as well as individually stable, and neither Nash stability nor core stability implies the other.

We now define properties on preference profiles. 

\begin{definition}
A preference profile $\{\succeq_i\}_{i=1}^n$ is \textit{separable} if
\begin{itemize}
\item $C\cup\{j\}\succeq_i C$ if and only if $\{i,j\}\succeq_i \{i\}$, and
\item $C\cup\{j\}\succ_i C$ if and only if $\{i,j\}\succ_i \{i\}$
\end{itemize}
for all $i,j\in N$ such that $i\in C$ and $j\not\in C$.
\end{definition}

\begin{definition}
A preference profile $\{\succeq_i\}_{i=1}^n$ is \textit{additively separable} if for all $i\in N$, there exists a function $v_i:N\rightarrow\mathbb{R}$ such that \[C\succeq_i D \text{ if and only if } \sum_{j\in C}v_i(j)\geq\sum_{j\in D}v_i(j)\]
for all $C,D\in N_i$.

Furthermore, an additively separable preference profile is called \textit{symmetric} if $v_i(j)=v_j(i)$ for all $i,j\in N$, and is called \textit{mutual} if for all $i,j\in N$, $v_i(j)\geq 0$ whenever $v_j(i)\geq 0$.
\end{definition}



\begin{definition}
A preference profile $\{\succeq_i\}_{i=1}^n$ is \textit{anonymous} if $C\sim_i D$ for any $i\in N$ and any two coalitions $C,D$ such that $i\in C,D$ and $|C|=|D|$.
\end{definition}

\begin{definition}
A preference profile $\{\succeq_i\}_{i=1}^n$ satisfies the \textit{common ranking property} if there exists a function $w:2^N\backslash\{\emptyset\}\rightarrow\mathbb{R}$ such that for all $i\in N$,
\[C\succeq_i D \text{ if and only if } w(C)\geq w(D)\]
for all $C,D\in N_i$.
\end{definition}

\begin{definition}
Given a nonempty set $V\subseteq N$, a nonempty subset $S\subseteq V$ is a \textit{top coalition} of $V$ if for any $i\in S$ and any $T\subseteq V$ with $i\in T$, we have $S\succeq_i T$. A preference profile $\{\succeq_i\}_{i=1}^n$ satisfies the \textit{top-coalition property} if for any nonempty set $V\subseteq N$, there exists a top coalition of $V$.
\end{definition}

\section{Subset-additive and subset-neutral hedonic games}
\label{sec:subaddsubneu}

In this section, we define a generalization of additively separable games that we call \textit{subset-additive hedonic games}, and show that the class of subset-additive hedonic games in fact coincides with the class of hedonic games. We then define a generalization of symmetric additively separable games that we call \textit{subset-neutral hedonic games}. We show that subset-neutral hedonic games have more representation power than symmetric additively separable games, and we generalize a result by Bogomolnaia and Jackson (2002) by proving that the existence of a Nash stable partition and an individually stable partition is guaranteed in subset-neutral hedonic games.

We start with the definition of subset-additive preference profiles.

\begin{definition}
A preference profile $\{\succeq_i\}_{i=1}^n$ is \textit{subset-additive} if for all $i\in N$, there exists a function $v_i:N_i\rightarrow\mathbb{R}$ such that 
\[C\succeq_i D \text{ if and only if } \sum_{i\in C'\subseteq C}v_i(C')\geq\sum_{i\in D'\subseteq D}v_i(D').\]
\end{definition}

It turns out that subset-additivity does not provide a restriction on the preference profile, as the following proposition shows.

\begin{proposition}
\label{prop:subsetadd}
Any hedonic game is also a subset-additive hedonic game.
\end{proposition}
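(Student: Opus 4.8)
The plan is to reduce the statement to a purely combinatorial fact: that an arbitrary real-valued function on the lattice of subsets containing $i$ can be written as a sum of ``increments'' over all sub-coalitions, which is exactly inclusion--exclusion (M\"obius inversion) on the Boolean lattice.

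First I would fix a player $i$ and recall that $\succeq_i$ is a complete, reflexive, transitive order over the \emph{finite} set $N_i$. Any such order on a finite set admits a utility representation: there exists $u_i : N_i \to \mathbb{R}$ such that $C \succeq_i D$ if and only if $u_i(C) \geq u_i(D)$ for all $C, D \in N_i$. (One can simply partition $N_i$ into indifference classes, order the classes by $\succ_i$, and assign to each class an increasing real value.) This reduces the problem to exhibiting a single function $u_i$ in subset-additive form, so the ordinal content of the preference is now encoded numerically.

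The heart of the argument is then to find $v_i : N_i \to \mathbb{R}$ with
\[
u_i(C) = \sum_{i \in C' \subseteq C} v_i(C') \quad \text{for every } C \in N_i.
\]
I would define $v_i$ by inclusion--exclusion, setting
\[
v_i(C) = \sum_{i \in C' \subseteq C} (-1)^{|C| - |C'|}\, u_i(C').
\]
Since every coalition appearing in these sums contains $i$, the relevant poset is order-isomorphic (via $C' \mapsto C' \setminus \{i\}$) to the Boolean lattice of subsets of $C \setminus \{i\}$, whose M\"obius function is $(-1)^{|C| - |C'|}$. The M\"obius inversion theorem then guarantees that substituting this $v_i$ back into the subset sum recovers $u_i(C)$ exactly. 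If one wishes to avoid quoting M\"obius inversion, the same identity can be established directly by induction on $|C|$: peel off the top term $v_i(C)$ and apply the inductive hypothesis to all proper sub-coalitions containing $i$.

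Finally, with the representation $u_i(C) = \sum_{i \in C' \subseteq C} v_i(C')$ available for each player $i$, subset-additivity follows at once, since for all $C, D \in N_i$ we have $C \succeq_i D$ iff $u_i(C) \geq u_i(D)$ iff $\sum_{i \in C' \subseteq C} v_i(C') \geq \sum_{i \in D' \subseteq D} v_i(D')$; repeating this over all $i \in N$ exhibits the game as subset-additive. I expect the only real obstacle to be recognizing that the defining sum is precisely a zeta transform on the Boolean lattice, so that its inverse $v_i$ always exists; once that is seen, the verification that $v_i$ inverts the sum is routine.
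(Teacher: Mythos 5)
Your proof is correct and follows essentially the same route as the paper: both fix a utility representation $u_i$ of each finite preference order and then invert the subset-sum (zeta) transform to obtain $v_i$. The only cosmetic difference is that you write $v_i$ in closed form via M\"obius inversion, $v_i(C)=\sum_{i\in C'\subseteq C}(-1)^{|C|-|C'|}u_i(C')$, whereas the paper defines the very same function by the recursion $v_i(C)=u_i(C)-\sum_{i\in C'\subsetneq C}v_i(C')$ --- the inductive alternative you yourself mention.
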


\begin{proof}
Consider a preference profile $\{\succeq_i\}_{i=1}^n$ of a hedonic game. For all $i\in N$, let $u_i:N_i\rightarrow\mathbb{R}$ be a utility function consistent with player $i$'s preference profile. We define the function $v_i$ recursively from smaller to larger sets in $N_i$. If $v_i(C')$ has been defined for all $\{i\}\in C'\subsetneq C$, we define 
\[v_i(C)=u_i(C)-\sum_{i\in C'\subsetneq C}v_i(C'). \]
It follows that 
\[u_i(C)\geq u_i(D) \text{ if and only if } \sum_{i\in C'\subseteq C}v_i(C')\geq\sum_{i\in D'\subseteq D}v_i(D'),\]
which implies that the game is subset-additive, as desired.
\end{proof}

Although subset-additivity provides no restriction on the preference profile, if we impose a neutrality condition on the utility function, we obtain a smaller class of preference profiles.

\begin{definition}
\label{def:subsetneu}
A preference profile $\{\succeq_i\}_{i=1}^n$ is \textit{subset-neutral} if there exists a function $w:2^N\backslash\{\emptyset\}\rightarrow\mathbb{R}$ such that  
\[C\succeq_i D \text{ if and only if } \sum_{i\in C'\subseteq C}w(C')\geq\sum_{i\in D'\subseteq D}w(D')\]
for all $i\in N$.
\end{definition}

Subset-neutral preference profiles are useful for modeling situations in which different teams can form within a coalition. For example, suppose that a police department is divided into different subdivisions, which correspond to our coalitions. Certain teams of police officers will be assigned by the chief to tackle a criminal case based on their combined specialty if they belong to the same subdivision. The value of a subdivision to a police officer is therefore the sum of the values of the different teams to which he will be assigned to work on cases. Since team chemistry varies according to the composition of the team, one can imagine that the value is different for different teams and cannot be broken down into values between pairs as in additively separable preference profiles. Subset-neutral preference profiles also allow for situations in which the chief assigns as many or as few teams as he likes, since teams that are not assigned simply correspond to a value of 0.

If we set $w(C)=0$ for all $|C|>2$ in Definition \ref{def:subsetneu}, we obtain the class of symmetric additively separable preference profiles. Hence any symmetric additively separable preference profile is also subset-neutral. On the other hand, not all subset-neutral profiles are additively separable (or even separable), as the following example shows.

\begin{example}
Consider the game with $N=\{1,2,3\}$ and the function $w$ given by 
\begin{itemize}
\item $w(\{i\})=0$ for all $i\in N$;
\item $w(\{1,2\})=w(\{1,3\})=w(\{2,3\})=1$;
\item $w(\{1,2,3\})=-10$. 
\end{itemize}
We have $\{1,3\}\succ_1\{1\}$ and $\{1,2,3\}\prec_1\{1,2\}$, which violates separability.
\end{example}

A preference profile satisfying subset neutrality cannot have cycles on coalitions of size 2 between different players. For instance, it cannot be the case that $\{1,2\}\succ_1\{1,3\}, \{1,3\}\sim_3\{2,3\}$, and $\{2,3\}\succ_2\{1,2\}$ hold simultaneously. On the other hand, cycles on coalitions of size greater than 2 between different players can occur, as the following example shows. 

\begin{example}
Consider the game with $N=\{1,2,3,4\}$ and the function $w$ given by 
\begin{itemize}
\item $w(\{i\})=0$ for all $i\in N$;
\item $w(\{1,2\})=w(\{1,2,3\})=w(\{1,2,4\})=w(\{1,4\})=w(\{2,3\})=0$; 
\item $w(\{1,3\})=w(\{2,4\})=1$. 
\item The value of $w$ on other subsets can be defined arbitrarily. 
\end{itemize}
We have $\{1,2,3\}\succ_1\{1,2,4\}$ and $\{1,2,4\}\succ_2\{1,2,3\}$.
\end{example}

Bogomolnaia and Jackson (2002) showed that for symmetric additively separable hedonic games, a Nash stable partition and an individually stable partition exist. The next theorem generalizes that result.

\begin{theorem}
\label{thm:neunash}
For subset-neutral hedonic games, a Nash stable partition and an individually stable partition exist.
\end{theorem}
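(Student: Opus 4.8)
The plan is to prove the stronger statement that a Nash stable partition exists, since every Nash stable partition is individually stable (as already observed in Section 2). To do this I would adapt the potential-function argument that Bogomolnaia and Jackson used for symmetric additively separable games. Let $w$ be the shared function witnessing subset-neutrality, and for a coalition $C$ with $i\in C$ write
\[
u_i(C)=\sum_{i\in S\subseteq C}w(S),
\]
so that $C\succeq_i D$ iff $u_i(C)\geq u_i(D)$. Define the potential of a partition $\pi$ by
\[
\Phi(\pi)=\sum_{C\in\pi}\ \sum_{\emptyset\neq S\subseteq C}w(S),
\]
i.e., the total $w$-weight of all nonempty subsets lying entirely inside a single block of $\pi$.

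The key step is to show that a single-player deviation changes $\Phi$ by exactly the moving player's change in utility. Suppose player $i$ leaves its current block $A=C_\pi(i)$ and joins $C\in\pi\cup\{\emptyset\}$, forming $C\cup\{i\}$, and call the resulting partition $\pi'$. In the block $A$, the subsets that survive the departure of $i$ are precisely the subsets of $A$ not containing $i$, so the $A$-contribution to $\Phi$ drops by $\sum_{i\in S\subseteq A}w(S)=u_i(A)$. In the block $C$, the subsets newly created are precisely the subsets of $C\cup\{i\}$ that contain $i$, so its contribution rises by $\sum_{i\in S\subseteq C\cup\{i\}}w(S)=u_i(C\cup\{i\})$. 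No other block is affected, hence
\[
\Phi(\pi')-\Phi(\pi)=u_i(C\cup\{i\})-u_i(A).
\]
This is exactly where neutrality enters: because every player evaluates coalitions through the same function $w$, the global bookkeeping of $\Phi$ agrees term-by-term with the moving player's private gain. In the symmetric additively separable special case $w(S)=0$ for $|S|>2$, this reduces to the familiar edge-weight potential.

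To finish, I would note that $N$ is finite, so there are finitely many partitions and $\Phi$ attains a maximum at some $\pi^*$. If $\pi^*$ were not Nash stable, some $i$ and some $C\in\pi^*\cup\{\emptyset\}$ would satisfy $C\cup\{i\}\succ_i C_{\pi^*}(i)$, i.e. $u_i(C\cup\{i\})>u_i(C_{\pi^*}(i))$; by the displayed identity this move would strictly increase $\Phi$, contradicting maximality. Thus $\pi^*$ is Nash stable, and therefore individually stable as well.

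I expect the only genuine obstacle to be the careful verification of the potential-change identity, in particular confirming that the subsets gained in $C\cup\{i\}$ and lost in $A$ are exactly those containing $i$, and handling the boundary cases $A=\{i\}$ (so $A$ vanishes from the partition) and $C=\emptyset$ (so $i$ splits off as a singleton with value $w(\{i\})$). These are routine once the indexing is set up correctly.
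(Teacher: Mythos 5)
Your proposal is correct and follows essentially the same route as the paper's proof: the same potential function $\Phi(\pi)=\sum_{C\in\pi}\sum_{\emptyset\neq S\subseteq C}w(S)$, maximized over the finitely many partitions, with the observation that a single-player deviation changes $\Phi$ by exactly the deviator's utility change, contradicting maximality. Your explicit verification of the potential-change identity and the boundary cases ($A=\{i\}$, $C=\emptyset$) is a bit more detailed than the paper's argument, but it is the same proof.
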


\begin{proof}
Since a Nash stable partition is also individually stable, it suffices to show that a Nash stable partition exists.

Consider a partition $\pi$ that maximizes the potential function \[\Phi(\pi)=\sum_{C\in\pi}\sum_{\emptyset\neq C'\subseteq C}w(C');\] such a partition exists since the total number of partitions is finite. We claim that $\pi$ is a Nash stable partition. Suppose for contradiction that player $i$ has an incentive to move from coalition $C_\pi(i)$ to a different coalition $C_j$. This means that the utility that $i$ gains from the subsets in $C_j$ is greater than the utility that $i$ gains from the subsets in $C_\pi(i)$, i.e., \[\sum_{i\in D\subseteq C_j\cup\{i\}}w(D)>\sum_{i\in D\subseteq C_\pi(i)}w(D).\]

Let $\pi'$ be the partition that is obtained from $\pi$ if $i$ moves from coalition $C_\pi(i)$ to $C_j$. The potential function of the partition $\pi'$ is
\begin{align}
\Phi(\pi')  &= \Phi(\pi)+\sum_{i\in D\subseteq C_j\cup\{i\}}w(D)-\sum_{i\in D\subseteq C_\pi(i)}w(D) \notag \\
            &> \Phi(\pi), \notag 
\end{align}
contradicting the assumption that $\pi$ is a partition that maximizes the potential function.
\end{proof}

The proof of Theorem \ref{thm:neunash} relies crucially on neutrality. Bogomolnaia and Jackson (2002) showed that an individual stable partition (and hence a Nash stable partition) may not exist even if preference profiles are additively separable, mutual, and single peaked on a tree. (For the definition of single-peakedness on a tree, we refer to their paper.) On the other hand, symmetry in additively separable hedonic games is not enough to guarantee the existence of a core stable partition. Indeed, Banerjee et al. (2001) showed that a core stable partition (and hence a strong core stable partition) may not exist even if preference profiles are additively separable and symmetric.

\section{Neutral anonymity}

In this section, we define a restriction of anonymous hedonic games that we call \textit{neutrally anonymous hedonic games}. We show that such games are subset-neutral, and hence existence of a Nash stable partition and an individually stable partition is guaranteed by results in Section \ref{sec:subaddsubneu}. We then show that a partition that is both core stable and individually stable must exist in neutrally anonymous hedonic games, and in fact, in the more general class of games whose preference profiles satisfy the common ranking property. We exhibit an algorithm to find such a partition.

We start with the definition of neutrally anonymous preference profiles.

\begin{definition}
\label{def:neuanon}
A preference profile $\{\succeq_i\}_{i=1}^n$ is \textit{neutrally anonymous} if there exists a function $f:\{1,2,\ldots,n\}\rightarrow\mathbb{R}$ such that for all $i\in \{1,2,\ldots,n\}$,
\[C\succeq_i D \text{ if and only if } f(|C|)\geq f(|D|).\]
\end{definition}

Note that the function $f$ in the definition takes on the cardinality of the set of players rather than the set of players itself.

Any neutrally anonymous preference profile is also anonymous and satisfies the common ranking property. On the other hand, a neutrally anonymous preference profile need not be separable or single-peaked on the size of the coalition to which the player belongs. Indeed, consider the neutrally anonymous hedonic game with $N=\{1,2,3\}$ and the function $f$ given by $f(1)=0$, $f(2)=1$ and $f(3)=-1$. This preference profile is neither separable nor single-peaked on the size of the coalition to which the player belongs.

Even though neutral anonymity is somewhat restrictive, there are interesting situations that can be modeled using preference profiles satisfying this property. For instance, a teacher may assign a different amount of work in a classroom project to groups of students of different sizes. If the students are only concerned about the amount of work that they have to do, their utility will only depend on the size of their group. As another example, a restaurant may provide a different amount of food or issue a different price depending on the size of the party. If the utility of each customer is determined by the food-to-price ratio, the situation can again be modeled with a neutrally anonymous hedonic game. One can imagine that in such situations, the preference profile need not be monotonic or even single-peaked in the size of the coalition to which the player belongs.

It turns out that any neutrally anonymous hedonic game is also a subset-neutral hedonic game, as the following proposition shows.

\begin{proposition}
\label{prop:neuanon}
Any neutrally anonymous preference profile is also subset-neutral.
\end{proposition}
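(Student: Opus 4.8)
The plan is to build the witness function $w$ so that it depends only on the cardinality of its argument, matching the fact that neutrally anonymous preferences depend only on coalition size. Concretely, I would look for a function $g:\{1,2,\ldots,n\}\rightarrow\mathbb{R}$ and set $w(S)=g(|S|)$ for every nonempty $S\subseteq N$. The hope is that with this choice the subset-additive utility $\sum_{i\in C'\subseteq C}w(C')$ collapses into a quantity that depends on $|C|$ alone, which is exactly the kind of object that can be matched against $f(|C|)$.

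First I would carry out the counting step. For a fixed coalition $C$ containing $i$ and a fixed size $k$, the number of subsets $C'$ with $i\in C'\subseteq C$ and $|C'|=k$ is $\binom{|C|-1}{k-1}$, since one must choose the remaining $k-1$ members from the $|C|-1$ elements of $C\setminus\{i\}$. Consequently
\[
\sum_{i\in C'\subseteq C}w(C')=\sum_{k=1}^{|C|}\binom{|C|-1}{k-1}g(k)=:h(|C|),
\]
a quantity that is independent of both $i$ and the particular identity of $C$, depending only on $|C|$. This is the crucial point: because the sum is player-independent, the single function $w$ automatically serves all players $i$ as required by the definition of subset-neutrality, and the defining inequality reduces to $h(|C|)\geq h(|D|)$.

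It then remains to choose $g$ so that $h$ is order-isomorphic to $f$; in fact I would solve for $h=f$ exactly. Viewing $(g(1),\ldots,g(n))\mapsto(h(1),\ldots,h(n))$ as a linear map, its matrix is lower-triangular with diagonal entries $\binom{m-1}{m-1}=1$, hence unitriangular and invertible. Equivalently, one defines $g$ recursively by $g(1)=f(1)$ and $g(m)=f(m)-\sum_{k=1}^{m-1}\binom{m-1}{k-1}g(k)$, which yields $h(m)=f(m)$ for all $m\in\{1,\ldots,n\}$. With this $g$ and the resulting $w$, we obtain $C\succeq_i D\iff f(|C|)\geq f(|D|)\iff h(|C|)\geq h(|D|)\iff \sum_{i\in C'\subseteq C}w(C')\geq\sum_{i\in D'\subseteq D}w(D')$, which is precisely the subset-neutral condition.

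I do not expect a genuine obstacle here: the argument is essentially the observation that restricting $w$ to a cardinality-based function turns the problem into inverting a unitriangular linear system. The only items demanding care are verifying the binomial count and confirming that the triangular system is solvable, both of which are routine bookkeeping rather than a substantive difficulty.
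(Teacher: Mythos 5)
Your proof is correct and takes essentially the same approach as the paper: both define a witness $w$ that depends only on cardinality and fix its values recursively so that $\sum_{i\in C'\subseteq C}w(C')=f(|C|)$ for every $i\in C$. Your binomial count and unitriangular linear system are just the paper's subset-by-subset recursion $w(\{1,2,\ldots,k\})=f(k)-\sum_{1\in C\subsetneq\{1,2,\ldots,k\}}w(C)$ with the multiplicities $\binom{m-1}{k-1}$ made explicit.
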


\begin{proof}
Consider an arbitrary neutrally anonymous game, and let $f$ be the function associated to it as in Definition \ref{def:neuanon}. We will exhibit a function $w$ associated to it as in Definition \ref{def:subsetneu} such that 
\[f(|C|) = \sum_{i\in C'\subseteq C}w(C')\]
for all $i\in C\subseteq N$.

We define the function $w$ recursively from smaller to larger subsets of $N$. Because of neutral anonymity, sets of the same size have the same value of the function $w$. Hence it suffices to define the function $w$ for the sets $\{1\},\{1,2\},\ldots,\{1,2,\ldots,n\}$. If $w(C)$ has been defined for all $C\subseteq N$ such that $|C|<k$, we define 
\[w(\{1,2,\ldots,k\})=f(k)-\sum_{1\in C\subsetneq\{1,2,\ldots,k\}}w(C).\]
It follows that
\[f(|C|) = \sum_{i\in C'\subseteq C}w(C')\]
for all $i\in C\subseteq N$, as desired.
\end{proof}

We obtain the following theorem from Theorem \ref{thm:neunash} and Proposition \ref{prop:neuanon}.

\begin{theorem}
\label{thm:neuanonexistence}
For neutrally anonymous hedonic games, a Nash stable partition and an individually stable partition exist.
\end{theorem}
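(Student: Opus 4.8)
The plan is to obtain Theorem~\ref{thm:neuanonexistence} as an immediate corollary of the two results already established, rather than constructing a stable partition from scratch. The key observation is that Proposition~\ref{prop:neuanon} tells us that any neutrally anonymous preference profile is subset-neutral, and Theorem~\ref{thm:neunash} guarantees the existence of a Nash stable partition and an individually stable partition in every subset-neutral hedonic game. Chaining these two facts together yields the desired conclusion directly.

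More concretely, I would first invoke Proposition~\ref{prop:neuanon} to assert that the given neutrally anonymous hedonic game is a subset-neutral hedonic game. Proposition~\ref{prop:neuanon} supplies a function $w:2^N\backslash\{\emptyset\}\rightarrow\mathbb{R}$ such that $f(|C|)=\sum_{i\in C'\subseteq C}w(C')$ for all $i\in C\subseteq N$; this $w$ witnesses subset-neutrality in the sense of Definition~\ref{def:subsetneu}, since the preference comparison $C\succeq_i D$ reduces to comparing $f(|C|)$ with $f(|D|)$, which in turn equals comparing the two subset-sums of $w$. Having identified the game as subset-neutral, I would then apply Theorem~\ref{thm:neunash} to conclude that a Nash stable partition, and hence an individually stable partition, exists. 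The entire argument is a one-line composition of implications and requires no new combinatorial construction.

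There is no genuine obstacle here, since the heavy lifting was done in proving Proposition~\ref{prop:neuanon} (the recursive definition of $w$) and Theorem~\ref{thm:neunash} (the potential-function argument). The only point worth stating carefully is the logical direction: one must verify that the $w$ produced by Proposition~\ref{prop:neuanon} indeed satisfies the defining biconditional of subset-neutrality, namely that $C\succeq_i D$ holds if and only if the corresponding subset-sums of $w$ compare accordingly. This follows because $C\succeq_i D\iff f(|C|)\geq f(|D|)$ by neutral anonymity, and $f(|C|)=\sum_{i\in C'\subseteq C}w(C')$ together with $f(|D|)=\sum_{i\in D'\subseteq D}w(D')$ by the construction in Proposition~\ref{prop:neuanon}. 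Once this identification is in place, Theorem~\ref{thm:neunash} applies verbatim, completing the proof.
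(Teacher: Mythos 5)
Your proposal is correct and matches the paper exactly: the paper derives Theorem~\ref{thm:neuanonexistence} as an immediate consequence of Proposition~\ref{prop:neuanon} (neutral anonymity implies subset-neutrality) combined with Theorem~\ref{thm:neunash} (existence of Nash stable and individually stable partitions in subset-neutral games). Your additional care in verifying that the constructed $w$ satisfies the biconditional of Definition~\ref{def:subsetneu} is a fine touch but not a departure from the paper's argument.
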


In contrast to Theorem \ref{thm:neuanonexistence}, Bogomolnaia and Jackson (2002) showed that a Nash stable partition may not exist even if the preference profile is anonymous and single-peaked. The next example shows that a Nash stable partition may not exist even if the preference profile satisfies the common ranking property.

\begin{example}
Consider the game with $N=\{1,2,3\}$ and the common ranking given by \[\{1,2\}\succ\{1\}\succ\{2\}\succ\{1,2,3\}\succ\{1,3\}\succ\{2,3\}\succ\{3\}.\] 

Let $\pi$ be any partition of $N$. If player 3 is left alone in $\pi$, she will have an incentive to join one of the existing coalitions. Otherwise, player 3 is in the same coalition in $\pi$ as at least one of player 1 and player 2, and that player has an incentive to form a coalition by herself.
\end{example}

The next theorem shows that a core stable partition that is also individually stable and contractually individually stable exists in games whose preference profiles satisfy the common ranking property. On the other hand, a strict core stable partition may not exist even in neutrally anonymous hedonic games, as shown by the game with $N=\{1,2,3\}$ and the function $f$ given by $f(1)=0$, $f(2)=1$ and $f(3)=0$.

\begin{theorem}
\label{thm:neuanoncoreexistence}
For games whose preference profiles satisfy the common ranking property, a core stable partition that is also individually stable and contractually individually stable exists. Moreover, we exhibit an algorithm to compute such a partition.
\end{theorem}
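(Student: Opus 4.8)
The plan is to exhibit a greedy algorithm that repeatedly carves off the best available coalition, and to verify directly that its output is both core stable and individually stable; contractual individual stability then comes for free, since the excerpt already records that individual stability implies it. Let $w:2^N\backslash\{\emptyset\}\to\mathbb{R}$ be the function witnessing the common ranking property. The algorithm maintains the set $V$ of yet-unassigned players, initialized to $N$. At each step it selects a nonempty $S\subseteq V$ maximizing $w(S)$ over all nonempty subsets of $V$, adds $S$ to the partition, and deletes $S$ from $V$, halting when $V=\emptyset$. This terminates and returns a genuine partition $\pi=\{S_1,\dots,S_m\}$, where $S_t$ is chosen at step $t$ from the residual set $V_t$ (so $V_1=N$ and $V_{t+1}=V_t\backslash S_t$). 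The crucial refinement is the tie-breaking rule: among all $w$-maximizers of $V_t$, the algorithm picks one of maximum cardinality. I expect this to be exactly what is needed to upgrade weak inequalities into the strict ones that individual stability demands, and it is the main subtlety of the argument.

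Core stability is the most transparent step. Suppose toward a contradiction that some coalition $C$ blocks $\pi$, so $w(C)>w(C_\pi(i))$ for every $i\in C$. Let $i^{\ast}\in C$ be the member of $C$ assigned earliest, say at step $b$, so $C_\pi(i^{\ast})=S_b$. By minimality of $b$ no player of $C$ was removed before step $b$, whence $C\subseteq V_b$. Since the algorithm chose $S_b$ as a $w$-maximizer over subsets of $V_b$ and $C$ is one of the candidates, $w(S_b)\ge w(C)$, contradicting $w(C)>w(S_b)$. Hence no blocking coalition exists. This step uses only that all players share the common ranking; the tie-breaking rule is irrelevant here.

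For individual stability I would fix a player $i$ with $C_\pi(i)=S_a$ and a target $C\in\pi\cup\{\emptyset\}$ with $C\cup\{i\}\succ_i C_\pi(i)$, i.e.\ $w(C\cup\{i\})>w(S_a)$, and split into cases on where $C$ sits relative to $S_a$. If $C=\emptyset$ or $C=S_b$ with $b>a$, then $C\cup\{i\}\subseteq V_a$ (all of $C$ survives to step $a$, and $i\in S_a\subseteq V_a$), so the maximality of $S_a$ over subsets of $V_a$ gives $w(S_a)\ge w(C\cup\{i\})$, contradicting the assumed profitable deviation; thus such moves never tempt $i$. The remaining case is $C=S_b$ with $b<a$. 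Here $C\cup\{i\}\subseteq V_b$ because $i\in S_a\subseteq V_b$, so $w(S_b)\ge w(S_b\cup\{i\})$; and since $i\in V_b\backslash S_b$, the maximum-cardinality tie-break forbids equality (otherwise $S_b\cup\{i\}$ would be a strictly larger $w$-maximizer of $V_b$), yielding the strict inequality $w(S_b)>w(S_b\cup\{i\})$. By the common ranking every $j\in S_b$ then satisfies $S_b\succ_j S_b\cup\{i\}$, which blocks $i$'s deviation. This establishes individual stability, and contractual individual stability follows immediately.

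The main obstacle, as flagged above, is the case $b<a$: the naive greedy yields only the weak inequality $w(S_b)\ge w(S_b\cup\{i\})$, and a genuine tie there would violate individual stability, since an indifferent coalition cannot block the newcomer. Obtaining the strict inequality is precisely what the maximum-cardinality tie-break buys, so most of the care goes into stating that rule and checking it is consistent. I would also remark that, because the common ranking is specified by ranking all (exponentially many) coalitions, searching $V_t$ for a maximum-cardinality $w$-maximizer is polynomial in the input size, so the procedure is a bona fide algorithm.
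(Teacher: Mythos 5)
Your proposal is correct and takes essentially the same approach as the paper: the identical greedy algorithm with the maximum-cardinality tie-break, core stability via the earliest-assigned member of a putative blocking coalition (the paper phrases this as an induction), and individual stability by splitting on whether the target coalition was formed before or after the deviator's, with the tie-break supplying the strict inequality in the ``formed before'' case. Your write-up merely makes the paper's terser argument more explicit.
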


\begin{proof}
Since individual stability implies contractually individual stability, it suffices to give an algorithm to compute a core stable partition that is also individually stable. The algorithm operates as follows:
\begin{enumerate}
\item Initially, let $S$ be the set of all players.
\item Repeat the following until $S=\emptyset$:
\begin{itemize}
\item Choose a coalition $T\subseteq S$ that ranks highest in the common ranking of all the nonempty subsets of $S$. If there are several such coalitions, choose one with the largest size.
\item Remove the players in $T$ from $S$.
\end{itemize}
\end{enumerate}

The algorithm clearly terminates. We first show that the resulting partition is core stable. We prove by induction that when a coalition is formed by the algorithm, every player in the coalition is unwilling to participate in a blocking coalition. This is true for the first coalition formed, since the players in the coalition rank the coalition highest in their ranking. For any subsequent coalition formed, a player in the coalition rank the coalition highest among the coalitions that she can form using her coalition and the remaining players. By the induction hypothesis, no player in previous coalitions is willing to participate in a blocking coalition. Hence the players in the coalition formed also do not want to participate in a blocking coalition, completing the induction.

We now show that the resulting partition is individually stable. Consider a deviation by player $i$. Again, she is already in a coalition that ranks highest among the coalitions that she can form using her coalition and the remaining players. Hence she has no incentive to switch to a coalition formed after her. If she switches to a coalition formed before her, then since that coalition is a largest coalition that ranks highest when it is formed, the inclusion of player $i$ necessarily leaves the members of that coalition worse off. Finally, player $i$ has no incentive to form a coalition on her own if she is not already in a coalition by herself. Hence the partition is individually stable, as desired.
\end{proof}

The algorithm in Theorem \ref{thm:neuanoncoreexistence} is a specific version of the algorithm proposed by Banerjee et al. (2001) for finding a core stable partition when the preference satisfies the top-coalition property. While our algorithm requires a stronger condition, it produces a partition that is both core stable and individually stable, instead of only core stable. The crucial difference between the two algorithms is that our algorithm requires in Step 2 that if there are several coalitions that rank highest among the remaining coalitions, our algorithm chooses a coalition with the largest size. If we eliminate this requirement, the resulting partition is no longer guaranteed to be individually stable, as the following example shows.

\begin{example}
Consider the game with $N=\{1,2,3\}$ and the common ranking given by \[\{1,2\}\sim\{1,2,3\}\succ\{2\}\succ\{1\}\succ\{1,3\}\succ\{2,3\}\succ\{3\}.\] 

Initially, $\{1,2\}$ and $\{1,2,3\}$ are the coalitions that rank highest in the common ranking. If the algorithm chooses the coalition $\{1,2\}$ as the first coalition, the resulting partition will be $\{1,2\},\{3\}$. This partition is not individually stable, however, since player 3 can benefit by joining the other two players while not leaving them worse off.
\end{example}

In addition, if we assume only the top-coalition property, then Banerjee et al.'s algorithm (or any specific version of it) is not guaranteed to return an individually stable partition, as the following example shows.

\begin{example}
Consider the game with $N=\{1,2,3\}$ and the preference profile $\succeq_1,\succeq_2,\succeq_3$ given by 
\begin{itemize}
\item $\{1,2\}\sim_1\{1,2,3\}\succ_1\{1\}\succ_1\{1,3\}$;
\item $\{1,2\}\sim_2\{1,2,3\}\succ_2\{2\}\succ_2\{2,3\}$;
\item $\{1,3\}\sim_3\{2,3\}\succ_3\{1,2,3\}\succ_3\{3\}$.
\end{itemize}

The preference profile $\succeq$ satisfies the top-coalition property. Indeed, $\{i\}$ is a top coalition of $\{i\}$ for all $i\in N$, $\{1,2\}$ is a top coalition of $\{1,2\}$ and $\{1,2,3\}$, $\{1\}$ is a top coalition of $\{1,3\}$, and $\{2\}$ is a top coalition of $\{2,3\}$.

Since $\{1,2\}$ is the unique top coalition in $\{1,2,3\}$, Banerjee et al.'s algorithm (or any specific version of it) will choose $\{1,2\}$ in the first step, yielding the partition $\{1,2\},\{3\}$. This partition is not individually stable, however, since player 3 can benefit by joining the other two players while not leaving them worse off.
\end{example}

We obtain the following as a corollary of Theorem \ref{thm:neuanoncoreexistence}.

\begin{corollary}
For neutrally anonymous hedonic games, a core stable partition that is also individually stable and contractually individually stable exists.
\end{corollary}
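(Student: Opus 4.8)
The plan is to show that the corollary follows immediately from Theorem~\ref{thm:neuanoncoreexistence} once I verify that every neutrally anonymous preference profile satisfies the common ranking property. First I would take an arbitrary neutrally anonymous game with associated function $f$ as in Definition~\ref{def:neuanon}, and define a function $w:2^N\backslash\{\emptyset\}\rightarrow\mathbb{R}$ by $w(C)=f(|C|)$. Then for any player $i\in N$ and any $C,D\in N_i$, the defining equivalence of neutral anonymity gives $C\succeq_i D$ if and only if $f(|C|)\geq f(|D|)$, which by construction of $w$ holds if and only if $w(C)\geq w(D)$. Thus the common ranking property holds with this choice of $w$, which is exactly the claim already noted after Definition~\ref{def:neuanon}.

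Having established that neutrally anonymous games form a subclass of games whose preference profiles satisfy the common ranking property, I would simply invoke Theorem~\ref{thm:neuanoncoreexistence}. That theorem guarantees, for any game satisfying the common ranking property, the existence of a partition that is simultaneously core stable, individually stable, and contractually individually stable; applying it to the neutrally anonymous case yields the desired conclusion directly.

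There is no genuine obstacle in this argument: the entire content is the one-line observation that $w(C)=f(|C|)$ realizes a common ranking, after which the result is an immediate specialization of Theorem~\ref{thm:neuanoncoreexistence}. If anything, the only point worth stating carefully is that $w$ is well defined precisely because neutral anonymity forces coalitions of the same size (containing a given player) to be ranked identically, so that the value $f(|C|)$ depends only on $C$ and not on the particular player evaluating it.
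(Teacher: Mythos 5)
Your proposal is correct and matches the paper's own route: the paper derives this corollary directly from Theorem~\ref{thm:neuanoncoreexistence}, relying on the observation (stated right after Definition~\ref{def:neuanon}) that every neutrally anonymous profile satisfies the common ranking property, which is exactly your $w(C)=f(|C|)$ construction made explicit. No gaps; your write-up just spells out the one-line verification the paper leaves implicit.
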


In contrast to Theorem \ref{thm:neuanoncoreexistence}, Banerjee et al. (2001) showed that a core stable partition may not exist even if the preference profile is anonymous, singled-peaked on population, and satisfies the population's intermediate preference property. (For definitions, we refer to their paper.)

Although a core stable partition produced by the algorithm in Theorem \ref{thm:neuanoncoreexistence} is guaranteed to be individually stable, this is not necessarily the case for any core stable partition in neutrally anonymous hedonic games. In fact, a core stable partition need not even be contractually individually stable in such games, as the following example shows.

\begin{example}
Consider the game with $N=\{1,2,3\}$ and the function $f$ given by $f(1)=0$ and $f(2)=f(3)=1$. 

The partition $\{1,2\},\{3\}$ is core stable but not contractually individually stable, since player 3 can join the coalition $\{1,2\}$ without leaving any member of her old or new coalition worse off. In this example, the grand coalition $\{1,2,3\}$ is both core stable and individually stable.
\end{example}

Theorems \ref{thm:neuanonexistence} and \ref{thm:neuanoncoreexistence} guarantee the existence of both a Nash stable partition and a core stable partition in neutrally anonymous hedonic games. Nevertheless, there may not exist a partition that is both Nash stable and core stable, as the following example shows.

\begin{example} 
\label{ex:nonashcore}
Consider the game with $N=\{1,2,3,4,5\}$ and the function $f$ given by $f(1)=0$, $f(2)=2$, $f(3)=1$, and $f(4)=f(5)=0$. 

Any core stable partition cannot contain a coalition of size 3, 4, or 5, since two players from such a coalition form a blocking coalition. A partition that consists only of coalitions of size 1 and 2, however, necessarily contains a player in a coalition by herself. Such a player has an incentive to switch to any existing coalition, which implies that such a partition cannot be Nash stable. 
\end{example}

Example \ref{ex:nonashcore} also shows that neither Nash stability nor core stability implies the other even in neutrally anonymous hedonic games.




\end{document}